\documentclass[runningheads]{llncs}

\usepackage[T1]{fontenc}
\usepackage{graphicx}

\usepackage{hyperref}
\usepackage{color}

\usepackage{amsmath}
\usepackage{amsfonts}
\usepackage{amssymb}

\usepackage{booktabs}
\usepackage{comment}
\usepackage{algorithm}
\usepackage{algorithmic}

\newcommand{\Z}{\mathbb{Z}}
\newcommand{\LM}{{\rm LM}}
\newcommand{\LC}{{\rm LC}}
\newcommand{\LT}{{\rm LT}}
\newcommand{\lcm}{{\rm lcm}}
\newcommand{\In}{{\rm in}}
\newcommand{\es}{{\rm S}}
\newcommand{\el}{{\rm L}}
\newcommand{\sugar}{{\rm sugar}}
\newcommand{\E}{{\rm E}}
\newcommand{\q}[1]{\texttt{#1}}

\begin{document}

\title{Letting Homogeneity Entropy Select S-Pairs in Buchberger’s Algorithm}

\author{Uzma Shafiq\inst{1,2}       \orcidID{0000-0001-9146-2478} \and
Matthew England\inst{1}             \orcidID{0000-0003-1701-8090} \and
AmirHosein Sadeghimanesh\inst{1}   \orcidID{0000-0002-6945-3118} \and \\
Nayyar Zaidi\inst{2}                \orcidID{0000-0003-4024-2517} 
}
\authorrunning{U. Shafiq \textit{et al.}}

\institute{Coventry University, Coventry, UK \and
Deakin University, Melbourne, Australia
}

\maketitle              

\begin{abstract}
We present a novel S-pair selection strategy called Homogeneity Entropy, for deciding the sequence of S-polynomials to construct in Buchberger’s algorithm to compute a Gr\"{o}bner basis. The strategy uses an information-theoretic measure derived from the distribution of degrees among the monomials of the S-polynomial: a very different approach to the classical heuristics such as Degree, Normal and Sugar, or indeed the more recent machine learning approaches to the problem. 

We implement this strategy and evaluate it on two different datasets: (1) variations of randomly generated polynomial systems with controlled numbers of variables, degrees, densities and number of polynomials per system; and (2) the PHCpack benchmark dataset sourced from real world problems. The Homogeneity Entropy strategy significantly outperforms classical strategies on random polynomial datasets, but on the PHCpack dataset the classical strategies perform better.  This suggests the right strategy varies with the shape of the data and we explore this in several experiments.  The new strategy offers practically meaningful gains on certain distributions, and represents the first use of such information-theoretic guidance in the optimisation of symbolic computation algorithms.

\keywords{Buchberger's Algorithm \and Gr\"{o}bner Basis \and S-pair Selection \and Shannon Entropy \and Information Theory}
\end{abstract}

\section{Introduction}

Gr\"{o}bner bases are a fundamental tool in computational algebra, widely used in applications such as algebraic geometry \cite{DK_86}, chemical reaction networks theory \cite{Sadeghimanesh_Feliu_2019}, coding theory \cite{CMF_08}, cryptography \cite{CMF_08,SMM_09} and systems theory \cite{BB_01}. There are multiple algorithms for computing Gr\"{o}bner bases, with the original, Buchberger’s algorithm \cite{Buch_06}, remaining conceptually central, and the F5 algorithm \cite{FJ_2002} generally considered to be the fastest. Constructing Gr\"{o}bner bases is known to require doubly exponential complexity in the worst-case scenario \cite{MEMA_1982}, but it has long been observed that significant optimisations are available.

Since \cite{HEWDPB14}, there has been an increasing amount of work exploring the use of machine learning tools to optimise computer algebra algorithms,  and covering topics such as the variable ordering for Cylindrical Algebraic Decomposition (e.g. \cite{dRE24,JDLHMZ23,PdREC24}), and  algorithm selection for symbolic integration (e.g. \cite{BEG24b,BSEG24,SNB23}). 

The first such paper for optimising Gr\"{o}bner basis construction was \cite{PSH_20} which used reinforcement learning to train an agent to decide S-pair selection in Buchberger's algorithm. Since then we have seen other machine learning optimisations in ideal theory \cite{HMS25} and even the direct machine learning generation of Gr\"{o}bner Bases via transformers \cite{KHI_23}.  

In this paper we are focused on the S-pair selection problem for Gr\"{o}bner Bases in Buchberger's algorithm.  Our paper stems from our recent recreation of the S-pair selection experiments in \cite{PSH_20}, where we extended the original methodology to consider alternative features including some from information theory. The strong predictive performance of features making use of Shannon entropy, which quantifies the degree of concentration (or uncertainty) of a distribution, led us to analyse this independently of the machine learning in the present paper.

\subsection{S-pair Selection}

Recall that in Buchberger's algorithm, the main loop processes a set of S-pairs one at a time, terminating when the set is empty: the act of processing one S-pair can potentially add new pairs to the set for consideration \cite{Buch_06}, but termination is guaranteed eventually. While the correctness of Buchberger’s algorithm is independent of the strategy on which S-pair to choose first, its efficiency is highly dependent on it, even when using reduction rules such as the Gebauer–M\"{o}ller criteria \cite{Gebauer_Moller_1988} which can avoid considering some pairs with redundant information.

Classical S-pair selection strategies such as Sugar selection \cite{Giovini_et_al_1991} try to control S-polynomial growth by prioritising pairs with low total degree, or minimal least common multiples of their leading monomials. While effective in many settings, such static heuristics optimise local properties of S-pairs and ignore the global structure of reductions that can occur during the computation \cite{MC_2004}. As a result, these strategies can perform suboptimally for ideals that exhibit strong structural dependencies or uneven reductions.  

\subsection{Contributions}

In this paper we introduce a new strategy: we interpret the normalised frequencies of term degrees in an S-polynomial as a probability distribution, and use Shannon entropy to measure the dispersion of polynomial’s monomials across different degrees, distinguishing degree-concentrated S-polynomials from degree-dispersed ones.  
We address the following research questions: 
\begin{itemize}
    \item RQ1. Can information-theoretic measures of S-polynomial structure serve as a useful signal for pair selection in Buchberger's algorithm?
    \item RQ2. How does the performance of an entropy-based selection strategy compare to classical strategies (Sugar, Degree)?
    \begin{itemize} 
        \item Across different distributions of randomly generated datasets.
        \item On the PHCpack benchmark systems dataset \cite{VPHCpack}.
    \end{itemize}
    \item RQ3. What input properties determine when our new selection strategy is competitive and when it is not?
\end{itemize}

    \section{Background and Preliminaries}
\label{sec:algebra}

The set of integers, and non-negative integers are denoted by $\Z$, and $\Z_{\geq 0}$ respectively. For a finite set $A$, its cardinality is denoted by $|A|$. When there is an implicit order among elements of $A$ and if $i$ is a positive integer less than or equal to $|A|$, then by $A_i$ we mean the $i$-th element of $A$.

Throughout this paper let $k$ be a field and $R=k[x_1,\dots,x_n]$ be a polynomial ring in the variables $x_1, \dots, x_n$ with coefficients from $k$. We assume an implicit order between the variables, and therefore a monomial $x_1^{a_1}x_2^{a_2}\cdots x_n^{a_n}$ can be written uniquely as $x^\alpha$ where $x=(x_1,\dots,x_n)$ and $\alpha=(a_1,\dots,a_n)$ without causing any confusion. A polynomial $f\in R$ can be written as $\sum_{\alpha\in A}c_\alpha x^\alpha$ where $A\subset\Z_{\geq 0}^n$ is a finite set. The degree of a term $c_\alpha x^\alpha$ (with $c_\alpha \neq 0$) is defined as $\sum_{i=1}^n a_i$, where $\alpha = (a_1, \dots, a_n)$. The (total) degree of $f$ is denoted by $\deg(f)$ and is defined as $\max(\sum_{i=1}^na_i\mid\alpha=(a_1,\dots,a_n),c_\alpha\neq 0)$. For $m\in\Z_{>0}$, sum of all terms of $f$ of the total degree equal to $m$ is called the $m$-th \emph{homogeneous component} of $f$. A polynomial is sum of its homogeneous components and a homogeneous polynomial has only one non-zero homogeneous component.

Given a fixed monomial order on $R$, say $\preceq$, then $x^{\alpha^\star}$ is the \emph{leading monomial} of $f$, denoted $\LM(f)$, if $\forall\alpha\in A$; $x^\alpha\preceq x^{\alpha^\star}$ and $c_{\alpha^\star}\neq 0$. 
The coefficient of $\LM(f)$ is called the \emph{leading coefficient} of $f$ and is denoted by $\LC(f)$. The \emph{leading term} of $f$ is denoted by $\LT(f)$ and is defined to be $\LC(f)\LM(f)$. For a set of polynomials $F$ we denote the ideal generated by $F$ in $R$ by $\langle F\rangle$. For an ideal $I\subseteq R$, the ideal generated by leading monomials of the polynomials of $I$ is called the \emph{initial ideal} of $I$ and is denoted by $\In(I)$, i.e. $\In( I ) = \langle \LM( f ) \mid f \in I \rangle$.

\subsection{Gr\"obner Bases and Buchberger's Algorithm}

As discussed in the introduction, one of the most important tools in computer algebra is the Gr\"obner basis, whose computation we aim to speed up. 

\begin{definition}\label{def:GB}
    A set $F\subseteq R$ is a \emph{Gr\"obner basis} for a fixed monomial order if $\langle \LM(f)\mid f\in F\rangle=\In(\langle F\rangle)$.
\end{definition}

For a finite set of polynomials $F\subset R$ and a fixed monomial order, we will denote the least common multiple of the leading monomials of elements of $F$ by $\el(F)$, that is $\el(F)=\lcm\big(\LM(f),f\in F\big)$.

For two polynomials $f,g\in R$ and a fixed monomial order, the \emph{S-polynomial} of $f$ and $g$, denoted by $\es(f,g)$, is defined as follows.
\[\es(f,g)=\frac{\el(f,g)}{\LT(f)}f-\frac{\el(f,g)}{\LT(g)}g.\]

A polynomial $f$ is \emph{reduced} with respect to $F\subseteq R$, for a fixed monomial order, if there exists no monomial in $f$ that is divisible by any leading monomial of polynomials in $F$. We can always reduce an arbitrary polynomial with respect to a set of polynomials, e.g. using (multivariate) polynomial division. 
If $f,f_1,\dots,f_r\in R$ and the remainder of division of $f$ with respect to $F=\lbrace f_1,\dots,f_r\rbrace$ is $g$, then $g$ is reduced with respect to $F$ and we say $f$ is reduced to $g$ under $F$, denoted $\overline{f}^F=g$.  We can now express, \emph{Buchberger's Criterion} for a Gr\"obner basis.

\begin{theorem}\label{thm:Buchberger_criteria}
    (\cite[Thm. 2.6.6]{Cox_et_al_2015}). 
    Let $F=\lbrace f_1,\dots,f_r\rbrace\subset R$ and fix a monomial order, then $F$ is a Gr\"obner basis if and only if for all pairs $i\neq j$, $\overline{\es(f_i,f_j)}^F=0$.
\end{theorem}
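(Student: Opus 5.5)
The forward direction is immediate from Definition~\ref{def:GB}. Suppose $F$ is a Gr\"obner basis. Each $\es(f_i,f_j)$ lies in $\langle F\rangle$. Let $g=\overline{\es(f_i,f_j)}^F$ be the remainder; it also lies in $\langle F\rangle$, since it differs from the S-polynomial by an $R$-combination of the $f_k$. If $g\neq 0$, then $\LM(g)\in\In(\langle F\rangle)=\langle \LM(f)\mid f\in F\rangle$. Because the latter is a monomial ideal, $\LM(g)$ is divisible by some $\LM(f_k)$. This contradicts $g$ being reduced with respect to $F$, so $g=0$.

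For the converse I will follow the standard argument: show that every nonzero $h\in\langle F\rangle$ has $\LM(h)$ divisible by some $\LM(f_k)$, which gives the required equality of ideals. Write
\[h=\sum_k q_k f_k,\]
and among all such representations choose one minimising
\[\delta=\max_k \LM(q_kf_k).\]
This is possible because the monomial order is a well-order. Clearly $\LM(h)\preceq\delta$. If equality holds we are done, since then $\LM(h)=\LM(q_k)\LM(f_k)$ for some $k$.

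Otherwise $\LM(h)\prec\delta$. Then the terms of degree $\delta$ cancel in the subsum
\[\sum_{k\in K}\LT(q_k)f_k,\qquad K=\lbrace k:\LM(q_kf_k)=\delta\rbrace.\]
The key lemma is this: a combination $\sum c_k x^{\alpha_k} f_k$ with all $x^{\alpha_k}\LM(f_k)=\delta$ whose $\delta$-terms cancel is a $k$-linear combination of the polynomials
\[\frac{\delta}{\el(f_j,f_l)}\,\es(f_j,f_l),\]
each of which has leading monomial strictly below $\delta$. I will prove it by normalising $p_k=x^{\alpha_k}f_k/\LC(f_k)$ and telescoping: since $\sum c_k\LC(f_k)=0$, one has
\[\sum c_k\LC(f_k)p_k=\sum_k d_k(p_k-p_{k+1})\]
for suitable scalars $d_k$, and each $p_k-p_{k+1}$ is a monomial multiple of $\es(f_k,f_{k+1})$.

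Next, use the hypothesis $\overline{\es(f_j,f_l)}^F=0$. The division algorithm then writes $\es(f_j,f_l)=\sum_k a_k f_k$ with $\LM(a_kf_k)\preceq\LM(\es(f_j,f_l))$. Multiplying by $\delta/\el(f_j,f_l)$ gives a representation of each summand with all products $\prec\delta$.

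Substituting these into the $\delta$-part yields a new representation of $h$. Its maximum is strictly less than $\delta$, contradicting minimality. Hence $\LM(h)=\delta$, and $\In(\langle F\rangle)\subseteq\langle\LM(f)\rangle$; the reverse inclusion is trivial.

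The main obstacle is the cancellation lemma and the bookkeeping that the division-algorithm representation respects the bound $\LM(a_kf_k)\preceq\LM(\es)$. That bound is where the standard degree property of multivariate division is invoked.
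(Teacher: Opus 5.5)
Your proposal is correct and essentially reproduces the standard argument of the source the paper cites. The paper gives no proof of its own, only the reference to Cox, Little and O'Shea, Thm.~2.6.6, and your route matches theirs. The forward direction uses that the remainder lies in $\langle F\rangle$ yet is reduced with respect to $F$. The converse takes a representation of $h$ minimising $\delta$, rewrites the cancelling top part as a combination of $\frac{\delta}{\el(f_j,f_l)}\es(f_j,f_l)$ (their Lemma~2.6.5), and then applies the division-algorithm bound $\LM(a_kf_k)\preceq\LM(\es(f_j,f_l))$.
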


Buchberger's criterion provides a constructive approach, recalled in Algorithm \ref{alg:Buchberger_algorithm}, to extend any set of polynomials to a Gr\"obner basis. That is, simply calculate the S-polynomial of each pair and if it is not reducible to 0, add the remainder, $\overline{\es(f_i,f_j)}^F$, to the original set; continuing this process with the new set and repeating until S-polynomials of all pairs reduce to zero (see \cite[Thm. 2.7.2]{Cox_et_al_2015}).  

\begin{algorithm}[ht]
    \caption{Buchberger's algorithm to compute a Gr\"obner basis.  The sub-algorithm \q{PairsUpdate} updates the set of S-pair candidates after each step, and the sub-algorithm \q{NextPair} selects the next S-pair. We assume an implicit order among elements of $G$: the order in which they were added to the set.}
    \label{alg:Buchberger_algorithm}
    \begin{algorithmic}[l]
        \STATE{\textbf{Function} \q{Groebner}( $F$, $\preceq$ ):}
        \REQUIRE{$F=\lbrace f_1,\dots,f_r\rbrace\subset R$ and $\preceq$ a monomial order on $R$.}
        \ENSURE{$G$, a Gr\"obner basis for $\langle F\rangle$ with respect to $\preceq$.}
        \STATE $G=\lbrace f_1\rbrace$, $Q=\lbrace\,\rbrace$.
        \FOR{$i=2,\dots,r$}
            \STATE $G=G\cup\lbrace f_i\rbrace$.
            \STATE $Q=$ \q{PairsUpdate}($G$, $Q$, $\preceq$).
        \ENDFOR
        \WHILE{$Q\neq\lbrace\,\rbrace$}
            \STATE $(i,j)=$ \q{NextPair}($G$, $Q$, $\preceq$).
            \STATE $g=\overline{\es(G_i,G_j)}^{\, G}$.
            \STATE $Q=Q\setminus\lbrace (i,j)\rbrace$.
            \IF{$g\neq 0$}
                \STATE $G=G\cup\lbrace g\rbrace$.
                \STATE $Q=$ \q{PairsUpdate}($G$, $Q$, $\preceq$).
            \ENDIF
        \ENDWHILE
        \RETURN $G$.
    \end{algorithmic}
\end{algorithm}

\begin{algorithm}[ht]
    \caption{For use as \texttt{PairsUpdate} in Algorithm \ref{alg:Buchberger_algorithm} to update the set of S-pairs upon addition of a new polynomial to the basis. No reduction is performed.}
    \label{alg:Naive_PairsUpdate}
    \begin{algorithmic}[l]
        \STATE{\textbf{Function} \q{NaivePairsUpdate}( $F$, $Q$, $\preceq$ ):}
        \REQUIRE{$F\subset R$ is a set of polynomials with an implicit order, $Q\subseteq\lbrace (i,j)\mid 1\leq i\lneqq j\leq |F|-1\rbrace$, and $\preceq$ is a monomial order on $R$.}
        \ENSURE{$Q\subseteq\lbrace (i,j)\mid 1\leq i\lneqq j\leq |F|\rbrace$, an updated set of S-pair candidates for the next iteration of Buchberger's algorithm.}
        \STATE{\emph{initialisation:} Let $r=|F|-1$.}
        \STATE $Q=Q\cup\lbrace (i,r+1)\mid 1\leq i\leq r\rbrace$.
        \RETURN $Q$.
    \end{algorithmic}
\end{algorithm}

\subsection{Reduction Criteria}

We refer to those pairs we need to build S-polynomials for as \emph{S-pairs}.  The most expensive action in Algorithm \ref{alg:Buchberger_algorithm} is the polynomial division. If we can detect that some of these S-pairs would reduce to zero without actually calculating the S-polynomial and performing the polynomial division, it will save computation time. We recall, in Algorithms~\ref{alg:Basic_PairsUpdate} and \ref{alg:GebauerMoller_PairsUpdate}, two strategies from the literature for reducing the size of the set of S-pairs, essentially concluding, at little cost, that certain S-pairs will not result in any additions to the basis. 

\begin{algorithm}[ht]
    \caption{For use as \texttt{PairsUpdate} in Algorithm \ref{alg:Buchberger_algorithm} to update the set of S-pairs upon addition of a new polynomial to the basis.  Uses two basic criterion \cite[Proposition 2.10.1]{Cox_et_al_2015} and \cite[Proposition 2.10.8]{Cox_et_al_2015} to remove candidate S-pairs where it may be concluded that their S-polynomials will not add any new basis elements in future iterations of Buchberger's algorithm.}
    \label{alg:Basic_PairsUpdate}
    \begin{algorithmic}
        \STATE{\textbf{Function} \q{BasicPairsUpdate}( $F$, $Q$, $\preceq$ ):}
        \REQUIRE{$F\subset R$ is a set of polynomials with an implicit order, $Q\subseteq\lbrace (i,j)\mid 1\leq i\lneqq j\leq |F|-1\rbrace$, and $\preceq$ is a monomial order on $R$.}
        \ENSURE{$Q'\subseteq\lbrace (i,j)\mid 1\leq i\lneqq j\leq |F|\rbrace$, an updated set of S-pair candidates for the next iteration of Buchberger's algorithm.}
        \STATE{\emph{initialisation:} Let $r=|F|-1$ and denote elements of $F$ by $f_1$, ..., $f_r$ and $g$ (the $(r+1)$-th element).}
        \FOR{$i=1,\dots,r$}
            \IF{$\el(f_i,g)\neq\LM(f_i)\LM(g)$}
                \STATE $Q=Q\cup\lbrace (i,r+1)\rbrace$.
            \ENDIF
        \ENDFOR
        \STATE $Q'=Q$.
        \FOR{$(i,j)\in Q$}
            \IF{$\el(f_i,f_j,f_k)=\el(f_i,f_j)$ and $(i,k),(k,i),(j,k),(k,j)\not\in Q'$}
                \STATE $Q'=Q'\setminus\lbrace (i,j)\rbrace$.
            \ENDIF
        \ENDFOR
        \RETURN $Q'$.
    \end{algorithmic}
\end{algorithm}

\begin{algorithm}[ht]
    \caption{For use as \texttt{PairsUpdate} in Algorithm \ref{alg:Buchberger_algorithm} to update the set of S-pairs upon addition of a new polynomial to the basis.  Uses the three criteria of \cite{Gebauer_Moller_1988} to remove candidate S-pairs where it may be concluded that their S-polynomials will not add any new basis elements in future iterations of Buchberger's algorithm.
    }
    \label{alg:GebauerMoller_PairsUpdate}
    \begin{algorithmic}
        \STATE{\textbf{Function} \q{GebauerMollerPairsUpdate}( $F$, $Q$, $\preceq$ ):}
        \REQUIRE{$F\subset R$ is a set of polynomials with an implicit order, $Q\subseteq\lbrace (i,j)\mid 1\leq i\lneqq j\leq |F|-1\rbrace$, and $\preceq$ is a monomial order on $R$.}
        \ENSURE{$Q\subseteq\lbrace (i,j)\mid 1\leq i\lneqq j\leq |F|\rbrace$, an updated set of S-pair candidates for the next iteration of Buchberger's algorithm.}
        \STATE{\emph{initialisation:} Let $r=|F|-1$ and denote elements of $F$ by $f_1$, ..., $f_r$ and $g$ (the $(r+1)$-th element).}
        \STATE $Q'=\lbrace\,\rbrace$.
        \FOR{$(i,j)\in Q$}
            \IF{ \textbf{not} \big( $\el(f_i,f_j,g)=\el(f_i,f_j)$, $\el(f_i,f_j)\neq\el(f_i,g)$, $\el(f_i,f_j)\neq\el(f_j,g)$ and $\el(f_i,g)\neq\el(f_j,g)$ \big)}
                \STATE $Q'=Q'\cup\lbrace (i,j)\rbrace$.
            \ENDIF
        \ENDFOR
        \STATE $Q''=\lbrace\,\rbrace$.
        \FOR{$i=1,\dots,r$}
            \STATE Label the pair $(i,r+1)$ with $\el(f_i,g)$.
            \IF{no pair in $Q''$ has the same label with $(i,r+1)$}
                \STATE $Q''=Q''\cup\lbrace (i,r+1)\rbrace$.
            \ELSIF{$\el(f_i,g)=\LM(f_i)\LM(g)$}
                \STATE replace the pair in $Q''$ that has the same label with $(i,r+1)$.
            \ENDIF
        \ENDFOR 
        \STATE $Q''=Q''\setminus\lbrace (i,r+1)\mid \el(f_i,g)=\LM(f_i)\LM(g)\rbrace$.
        \STATE Delete all $(i,r+1)$ from $Q''$ such that there exists $(j,r+1)$ in $Q''$ with $i\neq j$ and $\el(f_j,g)$ divides $\el(f_i,g)$.
        \STATE $Q=Q'\cup Q''$.
        \RETURN $Q$.
    \end{algorithmic}        
\end{algorithm}

\begin{theorem}\label{thm:Buchberger_algorithm}
    Let $F\subset R$ and fix a monomial order. A Gr\"obner basis for $\langle F\rangle$ can be constructed in finite steps by Algorithm~\ref{alg:Buchberger_algorithm}.
\end{theorem}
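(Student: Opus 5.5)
The plan splits into two parts: correctness of the output (assuming termination) and termination itself. Correctness will rest on Theorem~\ref{thm:Buchberger_criteria}; termination on Hilbert's basis theorem through the ascending chain condition on monomial ideals.

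\emph{Correctness.} Three simple invariants should be checked along the main loop.
\begin{itemize}
\item $\langle G\rangle=\langle F\rangle$ throughout. Each added $g=\overline{\es(G_i,G_j)}^{\,G}$ differs from $\es(G_i,G_j)$ by an element of $\langle G\rangle$, and $\es(G_i,G_j)\in\langle G\rangle$ itself, so $g\in\langle G\rangle$.
\item After a pair $(i,j)$ is processed, $\es(G_i,G_j)$ reduces to zero modulo the current $G$, and this persists as $G$ grows. If $g\neq0$ then $g\in G$ afterwards, and $\es(G_i,G_j)=q+g$ with $q$ a standard representation over the old $G$, so it reduces to zero over the new $G$. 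To avoid order-dependence of multivariate division, I would phrase this invariant as ``$\es(G_i,G_j)$ has a standard representation $\sum h_k G_k$ with $\LM(h_kG_k)\preceq\LM(\es(G_i,G_j))$''. This is exactly the hypothesis used in the proof of \cite[Thm.~2.6.6]{Cox_et_al_2015}.
\item With \q{NaivePairsUpdate}, every pair $(i,j)$ with $i<j\le|G|$ is inserted into $Q$ when its larger index enters, and is removed only after being processed.
\end{itemize}
So when $Q=\{\,\}$, every S-polynomial has a standard representation, and Theorem~\ref{thm:Buchberger_criteria} yields that $G$ is a Gr\"obner basis of $\langle F\rangle$.

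For the versions using Algorithms~\ref{alg:Basic_PairsUpdate} and \ref{alg:GebauerMoller_PairsUpdate}, I would instead cite the cited propositions of \cite{Cox_et_al_2015} and \cite{Gebauer_Moller_1988}. These show that each discarded pair either has coprime leading monomials (so its S-polynomial reduces to zero) or is covered by a chain criterion through pairs that remain or were processed. Hence the standard-representation property still holds for all pairs at termination.

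\emph{Termination.} This is the main step. Each time $g\neq0$ is added, $g$ is reduced with respect to the old $G$, so $\LM(g)\notin\langle\LM(G_k)\mid k\rangle$. The monomial ideals $J_t=\langle \LM(h)\mid h\in G^{(t)}\rangle$ therefore form a strictly increasing chain at each addition. By Hilbert's basis theorem (equivalently Dickson's lemma), $R$ is Noetherian, so only finitely many additions occur.

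After the last addition, no further pairs are ever added to $Q$, and each iteration of the loop removes one pair. The finite set $Q$ therefore empties after finitely many steps. Each individual step is itself finite, since multivariate division terminates by well-ordering of $\preceq$. Together these give the claim.

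The main obstacle is bookkeeping rather than ideas. The delicate point is the invariant that processed or discarded pairs remain ``good'' as $G$ grows, especially for the Gebauer--M\"oller deletions. For those I would rely on the refined form of Buchberger's criterion, namely lcm-representations and the chain criterion, rather than reproving it.
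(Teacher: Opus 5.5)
Your proposal is correct and takes the same route as the paper. The paper's proof identifies Algorithm~\ref{alg:Buchberger_algorithm} with each \q{PairsUpdate} variant as the algorithms of \cite{Cox_et_al_2015} and \cite{Gebauer_Moller_1988} and defers to them; you unpack the cited textbook argument (invariants plus Dickson's lemma) for the naive variant, cite the same sources for the criterion-based variants, and one small correction applies: pairs discarded by the chain criterion only inherit lcm-representations, not standard representations, so at termination you need the lcm-representation form of Buchberger's criterion rather than Theorem~\ref{thm:Buchberger_criteria} as stated, which your final paragraph already anticipates.
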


\begin{proof}
    Algorithm~\ref{alg:Buchberger_algorithm} when \q{PairsUpdate} is placed by Algorithm~\ref{alg:Naive_PairsUpdate}, Algorithm~\ref{alg:Basic_PairsUpdate} or Algorithm~\ref{alg:GebauerMoller_PairsUpdate} is the same as the algorithms in \cite[Theorem 2.7.2]{Cox_et_al_2015}, \cite[Subsection 4.2]{Gebauer_Moller_1988} (also \cite[Theorem 2.10.9]{Cox_et_al_2015}) or \cite[Subsection 4.4]{Gebauer_Moller_1988} respectively, and so we refer the reader to these sources for proofs of correctness.
\end{proof}

\subsection{Classical \textit{S}-Pair Selection Strategies}
\label{sec:S-pair_selections}

At every iteration, while the S-pair queue is non-empty, there is a question of which S-pair candidate to check next? Note that the number of iterations of the algorithm is not fixed and is impacted by this decision.  Hence a variety of strategies have been developed on selection of the next S-pair. 

In Buchberger's algorithm with a naive S-pairs set update strategy (Algorithm~\ref{alg:Naive_PairsUpdate}), S-polynomials of all S-pairs would be calculated at least once independently of the S-pair selection strategy.  However, when basic or Gebauer-M\"oller update strategies are used (Algorithms~\ref{alg:Basic_PairsUpdate} and \ref{alg:GebauerMoller_PairsUpdate}) some of the S-pairs that had been added in previous iterations and had not been processed may be deleted from future iterations without being processed at any stage. Therefore an S-pair selection strategy that can make good decisions without calculating S-polynomials can save some computation time. That is, if we make the decision using information from the calculation of S-polynomials of all current possible S-pairs then some of those calculations would not be used later and thus are a cost in making the decision itself.  Hence most classical strategies have focussed on information derived from the S-pair rather than S-polynomial.

\subsubsection{The First Strategy.}

This trivial ``strategy'' treats the set of S-pairs as a queue, selecting to process next whichever pair had been added to the set earliest by the algorithm.

\subsubsection{The Normal Strategy.}

Let $F\subset R$ and $I=\langle F\rangle$. If $\lbrace\LM(f)\mid f\in F\rbrace$ generates $\In(I)$, then it is a Gr\"obner basis. The initial ideal, $\In(I)$, is a monomial ideal and a monomial ideal has a unique minimal basis consisting of monomials, none of which divide the other ones \cite[Proposition 2.4.7]{Cox_et_al_2015}. Buchberger's algorithm will eventually find elements of $I$ that their leading monomials are the elements of this minimal basis of $\In(I)$. Therefore, the S-pair for which the reduced version of its S-polynomial is non-zero and has a minimal leading monomial among other cases with respect to divisibility, is sensible choice for the next iteration.

Let us replace the minimality with respect to divisibility by being the smallest with respect to the monomial order. Let $f,g\in F$; then as a direct consequence of the definitions of polynomial division and definition of S-polynomials, we have $\LM\big(\overline{\es(f,g)}^F\big)\preceq\LM\big(\es(f,g)\big)$, and $\LM\big(\es(f,g)\big)\preceq\el(f,g)$ respectively:
\begin{equation}\label{eq:normal_inequality}
\LM\big(\overline{\es(f,g)}^F\big)\preceq\LM\big(\es(f,g)\big)\preceq\el(f,g).
\end{equation}
Therefore to avoid performing the polynomial division, which is the costly part of Buchberger's algorithm, we can use $\el(f,g)$ as an upper-bound to control and judge which S-pair leads with a new polynomial of the lowest leading monomial to be added. This strategy is called the \emph{normal} strategy \cite{Buchberger_1979}. 
In the case of a tie between two S-pairs, the tie-breaking is done by the First Strategy.

\subsubsection{The Degree Strategy.}

Independently of the monomial order, a monomial of lower degree divides a larger set of monomials than a monomial of a higher degree: that means the possibility of reducing more S-pairs to zero and getting closer to the termination of Buchberger's algorithm. Again we want to avoid performing polynomial division, therefore we are more interested in using an upper-bound. A monomial order is called \emph{degree-compatible} if it first compares degrees of two monomials and declares the one with the lower degree is the smaller monomial, then it breaks the tie using other comparisons. Graded lexicographic, \emph{grlex}, and graded reverse lexicographic, \emph{grevlex}, are examples of degree-compatible monomial orders, while lexicographic order, \emph{lex}, is not. If $\preceq$ is a degree-compatible monomial order, then for any polynomial $h\in R$ we have $\deg(h)=\deg\big(\LM(h)\big)$, thus \eqref{eq:normal_inequality} implies $\deg\big(\overline{\es(f,g)}^F\big)\leq\deg\big(\es(f,g)\big)\leq\deg\big(\el(f,g)\big)$. The strategy of picking the S-pair whose $\el(f,g)$ has the lowest degree and then breaking ties with first, is called the \emph{degree} strategy \cite{PSH_20}. 

\subsubsection{The Sugar Strategy.}

For any two polynomials $f,g\in R$, $\frac{\el(f,g)}{\LT(f)}$ is a scalar times a monomial with degree equal to $\deg\big(\el(f,g)\big)-\deg\big(\LM(f)\big)$. Using the two relations that $\deg(f+g)\leq\max(\deg(f),\deg(g))$ and $\deg(fg)=\deg(f)+\deg(g)$, one can get an upper bound for degree of S-polynomials.
\begin{multline}
    \label{eq:sugar_inequality}
    \deg\big(\es(f,g)\big)\leq\max\Big(\deg\big(\el(f,g)\big)-\deg\big(\LM(f)\big)+\deg(f),\\ \deg\big(\el(f,g)\big)-\deg\big(\LM(g)\big)+\deg(g)\Big)
\end{multline}
The right hand side of \eqref{eq:sugar_inequality} is named the \emph{sugar} of the S-pair $(f,g)$ and is denoted by $\sugar(f,g)$. When $\preceq$ is a degree-compatible monomial order or $f$ and $g$ are homogeneous polynomials we have $\sugar(f,g)=\deg\big(\el(f,g)\big)$. The sugar strategy for S-pair selection chooses a pair with lower sugar, and breaks the tie with Normal strategy \cite{Giovini_et_al_1991}.

\section{The New Homogeneity Entropy Selection Strategy}
\label{sec:entropy}

Let $X$ be a discrete random variable with a finite sample space $\Omega=\lbrace s_1,\dots,s_t\rbrace$ and (density) probability distribution $p(X=s_i)=p_i$. Then the (Shannon) entropy of $X$ is $-\sum_{i=1}^tp_i\log\big(p_i\big)$ where $\log$ is the logarithm in base 2. 

\begin{definition}
    \label{def:homogeneity}
    Let $f=\sum_{\alpha\in A}c_\alpha x^\alpha$, where $A\subset\Z_{\geq 0}^n$ and $\forall\alpha\in A$, $c_\alpha\neq 0$, be a polynomial in $R$. Let $\Omega_f=\lbrace\deg(x^\alpha)\mid\alpha\in A\rbrace$. We define a random variable for $f$ that describes the chance of a randomly selected monomial of $f$ being of a given degree. The density probability function of this random variable assigns the proportion of the size of the homogeneous components to the size of the whole polynomial when size is measured by the number of monomials. That means
    \begin{equation}
        \label{eq:homogeneity_distribution}
        \forall i\in\Omega_f\,\colon\,p(X=i)=\frac{|\lbrace\alpha\in A\mid \deg(x^\alpha)=i\rbrace|}{|A|}.
    \end{equation}
    We define the \emph{homogeneity entropy} of $f$ to be the entropy of this random variable. We denote this entropy by $\E_f$ and the probability distribution in \eqref{eq:homogeneity_distribution} by $p_f$.
\end{definition}

\begin{lemma}
    \label{lem:entropy}
    \begin{itemize}
        \item[a.] If $f$ is a homogeneous polynomial, then $\E_f=0$.
        \item[b.] If $f$ is a binomial, then $\E_f$ is either 0, if it is homogeneous, or 1 otherwise.
        \item[c.] For any polynomial $f$, we have $0\leq\E_f\leq \log_2 |\Omega_f|$.
    \end{itemize}
\end{lemma}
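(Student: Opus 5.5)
All three parts follow directly from Definition~\ref{def:homogeneity} and standard facts about Shannon entropy on the finite sample space $\Omega_f$. Throughout I assume $f\neq 0$, so that $A\neq\emptyset$ and $p_f$ is well defined. I treat the parts in the order a, b, c, though c is logically the most general.

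For part a, I use that a homogeneous $f$ has a single non-zero homogeneous component. Hence every $\alpha\in A$ has the same degree $d$, so $\Omega_f=\lbrace d\rbrace$. Then \eqref{eq:homogeneity_distribution} gives $p_f(X=d)=|A|/|A|=1$, and therefore $\E_f=-1\cdot\log 1=0$.

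For part b, let $f$ be a binomial, so $|A|=2$, say $A=\lbrace\alpha,\beta\rbrace$. There are two cases. If $\deg(x^\alpha)=\deg(x^\beta)$, then $f$ is homogeneous and part a gives $\E_f=0$. Otherwise $\Omega_f$ has two elements, each with probability $1/2$ by \eqref{eq:homogeneity_distribution}. Then $\E_f=-2\cdot\frac12\log\frac12=1$, since $\log$ is in base 2.

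For part c, the lower bound holds because each $p_i=p_f(X=i)$ lies in $(0,1]$ for $i\in\Omega_f$. Every $i\in\Omega_f$ is attained by some monomial, so no $p_i$ is zero and there is no $0\log0$ convention to worry about. Thus $-\log p_i\ge 0$, and each summand $-p_i\log p_i$ is non-negative.

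The upper bound is the only step with real content. I would apply Jensen's inequality to the concave function $\log$:
\[
\E_f=\sum_{i\in\Omega_f}p_i\log\frac{1}{p_i}\le\log\Big(\sum_{i\in\Omega_f}p_i\cdot\frac{1}{p_i}\Big)=\log|\Omega_f|.
\]
An equivalent route is Gibbs' inequality, comparing $p_f$ with the uniform distribution on $\Omega_f$. Either way, the only care needed is that the sum runs over $\Omega_f$, where all $p_i>0$.
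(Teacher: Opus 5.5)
Your proposal is correct and follows the paper's proof, which calls (a) and (b) trivial and gets (c) by citing Shannon's result that entropy is maximised by the uniform distribution. Your Jensen (or Gibbs) argument is the standard proof of that cited fact. You also spell out two things the paper leaves implicit: the non-negativity of each summand for the lower bound, and the assumption $f\neq 0$.
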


\begin{proof}
    The proofs of (a) and (b) are trivial; (c) follows from the fact that maximum entropy occurs when the probability distribution is uniform \cite[p.~394]{Shannon_1948}.
\end{proof}

We define a new \emph{homogeneity entropy} S-pair selection strategy to choose the pair for which the corresponding S-polynomial has the lowest entropy, breaking any ties with the First strategy. We refer to this as the Entropy strategy for short. This strategy will prioritise S-pairs such that their S-polynomials are homogeneous, and avoid those with many homogeneous components. 

When an S-polynomial is a monomial (but not reduced to 0), it is a good choice, because in later polynomial divisions it simply removes all monomials that are a multiple of it.  Correspondingly, when the S-polynomial has many terms, the later polynomial divisions will be harder. We may think of the entropy strategy as an attempt to increase the chance of picking monomials and avoid large polynomials.  However, this is not strictly the case as, for example, it does not differentiate a monomial from a large homogeneous polynomial. 
 
One key difference between this new strategy and the classic ones summarised in Section \ref{sec:S-pair_selections} is that the new strategy requires the calculation of the S-polynomial to make the decision, while the others do not.  Of course, this bears an extra expense in the selection step. In the case of using Buchberger algorithm (Algorithm~\ref{alg:Buchberger_algorithm}) without using S-pairs reduction strategies (such as Algorithms~\ref{alg:Basic_PairsUpdate} and \ref{alg:GebauerMoller_PairsUpdate}) this extra expense is one that would have been made later in the algorithm anyway, but that is not the case with the more advanced strategies.  So this extra cost needs to be weighed against the value of the decisions being made. 

Note that making the selection based on the S-polynomial rather than the S-pair is something we experimented with following the results of \cite[\S 5.1]{PSH_20} (written up in more detail in the thesis \cite[\S 5.4.3]{Peifer2021}) which attempted to interpret the selection strategy of an agent trained by reinforcement learning and observed preferences based on properties of the S-polynomial rather than S-pair.

In the remainder of this paper, we experimentally compare the performance of this new selection strategy, with those in the literature.


\section{Experimental Methodology}

\subsection{Datasets}

\subsubsection{Synthetic Datasets.}

We have synthetically generated a range of datasets following the data generation process used in \cite{PSH_20}.  Each dataset consists of 1000 randomly generated polynomials defined over a fixed coefficient field $\mathbb{Z}/32003\mathbb{Z}$. These datasets are defined by four different parameters denoted \((v,\,d,\,p,\,\sigma)\), where \(v\) is the number of variables, \(d\) is the maximum total degree, \(p\) is the number of polynomials per system, and  \(\sigma\) is a density parameter used to determine the number of terms generated for each polynomial. 

We started with base dataset \texttt{3-20-10-0.3} meaning polynomial systems with 3 variables, maximum total degree 20, and exactly 10 input polynomials. For each polynomial, after generating a binomial, the number of additional terms is sampled from a Poisson distribution with parameter $\lambda=0.3$ controlling the polynomial sparsity. 
These parameters were varied to make further datasets:
\begin{itemize}
    \item dataset \texttt{3'5-20-10-0.3} has a higher number of variables compared to the base dataset, uniformly distributed between $3-5$; 
    \item dataset \texttt{3-20-10-0.5} increased the number of terms in each polynomial;
    \item dataset \texttt{3-25-10-0.3} has a higher maximum total degree of 25;
    \item dataset \texttt{3-10-10-0.3} has a lower maximum degree of 10; and
    \item dataset \texttt{3-20-2'10-0.3} has a lower number of input polynomials compared to the base dataset, varying the number between $2-10$. 
\end{itemize}


\subsubsection{PHCpack Benchmark Set.} 

We also experiment on a dataset of 94 polynomial systems drawn from the PHCpack benchmark collection, which aggregates systems collected from journal articles and real-world applications. Systems were imported into $\mathbb{Q}$ by exact decimal-to-rational conversion, with Buchberger’s algorithm then run over $\mathbb{Q}$. This dataset is described in \cite{VPHCpack} and is hosted online\footnote{\url{https://homepages.math.uic.edu/~jan/demo.html}} where it has grown since that original description. We discarded the systems with complex coefficients. 

\subsection{Buchberger's Algorithm Setup}

All experiments use Buchberger’s algorithm to compute reduced Gr\"{o}bner bases with respect to the graded reverse lexicographic monomial order (grevlex).

We apply the Gebauer–M\"{o}ller reduction criteria and inter-reduce the basis after each insertion, so that the different S-pair selection strategies are compared within an optimised, but fixed, Buchberger's algorithm implementation: \ref{alg:Buchberger_algorithm} using Algorithm \ref{alg:GebauerMoller_PairsUpdate}. The whole algorithm, along with the selection strategies, has been implemented in Python~3.7.16 with SymPy~1.10.1 \cite{SymPy2017}: this was a custom implementation because the implementation of Buchberger's algorithm that comes with SymPy does not give us sufficient freedom over the selection strategy to allow this experimentation.

\subsection{Runtime and Performance Metrics}

All experiments are run sequentially on a single core of a 12th-generation Intel Core i7-12700H (up to 4.7~GHz, 14 physical cores / 20 threads) with 32~GB RAM, and we report wall-clock runtime in seconds for each Gr\"{o}bner basis computation. All the reported runtimes include the cost of heuristic use. S-polynomials calculated by the Entropy heuristic are cached and reused later if needed.  Each run is terminated if it exceeds a timeout of 10 minutes (600 seconds) and such runs are marked as timeouts (TO).   

For each dataset and selection strategy, we report summary statistics around the runtime achieved.  We report on two aggregations:  one for the full dataset and one for just those examples for which all strategies complete before timeout.  In the former, the runtime of timeouts is included as the timeout limit (600s) and thus is a lower bound on the true runtime.  For the latter we also report the mean and median average runtime, and also the 95th percentile (only for experiments on PHCpack where a small number of problems in this dataset exhibit significantly larger runtimes than the bulk of dataset).

In addition we report the number of problems for which a strategy timed out and the number for which it achieved the minimum observed time.  Finally, we report the ratio of a strategy's runtime versus the runtime with the Entropy strategy (computed only on the subset of systems completed by all strategies) to understand the level of improvements Entropy achieves.

\section{Experimental Results}

In this section we evaluate the different S-pair selection strategies introduced above, keeping all other hardware and software components the same.  

\subsection{Random Polynomial Systems}
\label{section:Random_results}

Table~\ref{tab:random-results} reports the results on the various synthetic datasets.


Where all three strategies terminate on every instance, the Entropy strategy consistently yields significantly smaller total runtimes than Sugar, Normal and Degree, often by an order of magnitude: Sugar takes between 3.64--8.8 times longer than Entropy, Normal takes between 3.47--11.83 times longer, and Degree takes 3.19--12.4 times longer. Every distribution contains significant subsets on which each of the strategies performs best, suggesting each has separate useful information to draw on.  These subsets are much larger for Entropy: between 421--672 of the systems within each distribution, against Degree's 151--283, Normal's 70--158 and Sugar's 54--154.  Degree and Normal seem to consistently have larger such subsets than Sugar, while taking longer on the distribution as a whole.

All the strategies completed on all the instances for base dataset \texttt{3-20-10-0.3} with only a handful of timeouts on the other datasets except for \texttt{3'5-20-10-0.3}.  This is the dataset where we include more variables, the dominant parameter in the theoretical complexity of Gr\"{o}bner basis computation, and so it is not surprising that this dataset exhibits many more timeouts.  Note that Degree, Normal and Sugar exhibited many more timeouts compared to Entropy here (41, 37 and 59 respectively, compared to 8).  

Now let us consider the impact of the other dataset parameters.
\begin{itemize}
\item The effect of varying the second parameter, the maximum degree, can be observed by comparing the first three distribution rows:  the statistic in the final column shows the ratios of runtime for Degree, Normal and Sugar compared to Entropy increasing with the degree, indicating that the Entropy strategy is more robust to growth in the degree parameter. 
\item The effect of the third parameter, the number of input polynomials, can be observed by comparing distribution rows 1 and 5:  where again we see the ratios increasing with the parameter, indicating that the Entropy strategy is more robust to growth in the system size.
\item The effect of the final parameter, the sparsity measure in the polynomials, can be observed by comparing distribution rows 1 and 4:  it seems the benefit of entropy weakens as the polynomials become denser.
\end{itemize}
\begin{table}
\centering
\caption{Performance on random polynomial systems 
(1000 instances per synthetic configuration; 600\,s per-instance timeout).
\#TO is the number of timeouts; \# min Time counts instances on which
the strategy was the strict wall-clock winner among all the strategies (ties excluded); ``Runtime incl. TO'' aggregates wall-clock time with timeouts recorded at the 600\,s limit; ``Runtime w/o TO'' aggregates only over instances solved by all the strategies, for which we also report the Mean and Median statistics and Time vs Entropy which is the ratio of the runtime against what Entropy achieves.  The best performance for each metric on each dataset is in \textbf{\textcolor{blue}{bold blue text}}.}
\label{tab:random-results}
\begin{tabular}{l l r c r r r r c}
\toprule
Dataset & Strategy & \#TO & \shortstack{\# min\\Time}
  & \shortstack{Runtime\\incl. TO} & \shortstack{Time\\w/o TO}
  & \shortstack{Mean\\w/o TO} & \shortstack{Median\\w/o TO}
  & \shortstack{Time vs \\ Entropy} \\
\midrule
\texttt{\textbf{3-20-}}  & Sugar   & 0 & 72                        & 13m 12.3s                       & 13m 12.3s                       & 0.79\,s                       & 0.020\,s                       & 8.80 \\
\, \texttt{\textbf{10-0.3}} & Normal   & 0 &   125                    &    17m 45s   &   17m 45s   &    1.065\,s               &                  0.011\,s      & 11.83 \\
                               & Degree  & 0 & 172 & 18m 36.6s                       & 18m 36.6s                       & 1.12\,s                       & 0.012\,s & 12.41 \\
                               & Entropy & 0 & {\color{blue}\textbf{631}}                        & {\color{blue}\textbf{1m 30s}} & {\color{blue}\textbf{1m 30s}} & {\color{blue}\textbf{0.083\,s}}& {\color{blue}\textbf{0.007\,s}}                       & {\color{blue}\textbf{1.00}} \\
\midrule
\texttt{\textbf{3-10-}}  & Sugar   & {\color{blue}\textbf{0}}                        & 54                        & 47m 17.6s                        & 46m 22.6s                        & 2.84\,s                       & 0.100\,s                       & 3.88 \\
\, \texttt{\textbf{10-0.3}} & Normal   & 1 &  106                      &       1h 00:36s      &    50m 36s                   &                     3.04\,s   &     0.047\,s    & 4.23 \\
                               & Degree  & 1                        & 167 & 1h 00:045s                       & 50m 5s                        & 3.01\,s                       & 0.043\,s                       & 4.19 \\
                               & Entropy & {\color{blue}\textbf{0}} & {\color{blue}\textbf{672}}                        & {\color{blue}\textbf{11m 58s}} & {\color{blue}\textbf{11m 58s}} & {\color{blue}\textbf{0.72\,s}}& {\color{blue}\textbf{0.009\,s}}& {\color{blue}\textbf{1.00}} \\
\midrule
\texttt{\textbf{3-25-}}  & Sugar   & {\color{blue}\textbf{0}}                      & 63                       & 54m 38.2s                       & 43m 14.4s                       & 3.28\,s                       & 0.043\,s                        & 4.88 \\
\, \texttt{\textbf{10-0.3}} & Normal   & 1 &     158                   &    1h 38m 35s     &   1h 20m 7s   &  4.822\,s                      &          0.019\,s              & 9.04 \\
                               & Degree  & 3                        & 151 & 1h 59m 55s                      & 1h 29m 55s                     & 5.41\,s                       & 0.018\,s & 10.14 \\
                               & Entropy & {\color{blue}\textbf{0}} & {\color{blue}\textbf{625}}                       & {\color{blue}\textbf{8m 52s}} & {\color{blue}\textbf{8m 52s}} & {\color{blue}\textbf{0.534\,s}}& {\color{blue}\textbf{0.009\,s }}                       & {\color{blue}\textbf{1.00}} \\
\midrule
\texttt{\textbf{3-20-}}  & Sugar   & {\color{blue}\textbf{0}} & 54                        & 52m 11.6s                        & 51m 08.2s                        & 3.13\,s                       & 0.114\,s                        & 4.05 \\
\, \texttt{\textbf{10-0.5}} & Normal   & 1 &  144                      &  1h 0m 34s   &             50m 34s      &  3.037\,s                      &      0.048\,s                  & 4.00 \\
                               & Degree  & 1                        & 137                      & 1h 07m 38s                       & 57m 38.6s                        & 3.46\,s                       & 0.050\,s                        & 4.56 \\
                               & Entropy & {\color{blue}\textbf{0}} & {\color{blue}\textbf{664}} & {\color{blue}\textbf{12m 38s}} & {\color{blue}\textbf{12m 38s}} & {\color{blue}\textbf{0.76\,s}}& {\color{blue}\textbf{0.01\,s}} &  {\color{blue}\textbf{1.00}}\\
\midrule
\texttt{\textbf{3-20-}}& Sugar   & 0 & 127                       & 45m 41.5s                        & 45m 41.5s                        & 2.74\,s                       & 0.016\,s                       & 3.64 \\
\, \texttt{\textbf{2'10-0.3}}& Normal   & 0 &   152                     & 1h 8m 35s                        & 1h 8m 35s                        &     0.753\,s                   &  0.007\,s                      &  5.46 \\

                               & Degree  & 0 & 268 & 57m 24.6s                        & 57m 24.6s                        & 3.44\,s                       & 0.010\,s & 4.57 \\
                               & Entropy & 0 & {\color{blue}\textbf{453}}                        & {\color{blue}\textbf{12m 33s}} & {\color{blue}\textbf{12m 33s}} & {\color{blue}\textbf{0.753\,s}}& {\color{blue}\textbf{0.007}}\,s                       & {\color{blue}\textbf{1.00}} \\

\midrule
\texttt{\textbf{3'5-20-}}& Sugar   & 59                        &154                        & 13h 32m 38s                      & 3h 33m 10s      & 13.78\,s       & 0.030\,s                        & 6.6 \\
\, \texttt{\textbf{10-0.3}} & Normal   & 37 &   70                     &    9h 20m 9s    & 1h 52m   &   7.24\,s                     &           0.034\,s             &  3.47 \\
                               & Degree  & 41                        & 283 & 9h 34m 42s                       & 1h 43m 14s                       & 6.67\,s                       & 0.020\,s & 3.19 \\
                               & Entropy & {\color{blue}\textbf{8}} & {\color{blue}\textbf{421}}                        & {\color{blue}\textbf{2h 25m}} & {\color{blue}\textbf{32m 19s}} & {\color{blue}\textbf{2.09\,s}} & {\color{blue}\textbf{0.014\,s}}                       & {\color{blue}\textbf{1.00}} \\

\bottomrule
\end{tabular}
\end{table}

The above suggests strong evidence for the performance of Entropy in comparison to the established heuristics.

\subsection{PHCpack Dataset}
\label{section:PHC_results}

Finally, we test on the PHCpack dataset, whose problems are sourced from the real world, with the overall results reported in Table~\ref{tab:phc-results}. This lists the number of timeouts and the two aggregated runtimes: total runtime on all \(94\) systems (including the time spent on timeouts), and the total runtime for the \(55\) systems that terminated without timing out for all three strategies. Table~\ref{tab:phc-results} also reports the statistics on mean and median, along with the number of systems that each strategy performed the best for.

The PHCpack dataset is different from the synthetic dataset qualitatively as it includes a mixture of problems that are structurally trivial systems (solvable in
milliseconds) and genuinely hard systems (requiring minutes to hours). All strategies exhibit some timeouts and so we focus attention to the 55 systems that were solved by all strategies. 
and we discuss them in turn.

\subsubsection{Aggregate Performance.}

The results in Table \ref{tab:phc-results} indicate that Sugar completes on the highest number of problems thus exhibiting the least number of timeouts, one more than Degree in second place, which solves three more than Normal and four more than Entropy. Because this dataset is comparatively small, the extra timeouts are far greater than the differences observed in the Runtime incl. TO column, and so a better idea of the comparative runtimes can be given by the aggregation excluding timeouts.   Sugar is the fastest on aggregate, as well as on the commonly solved systems than for the synthetic data in Table \ref{tab:random-results}.  

\begin{table}
\centering
\caption{Performance on the PHCpack benchmark suite (94 real-world polynomial systems drawn from engineering and scientific applications; 600\,s per-instance
timeout). \#TO is the number of timeouts.
``Runtime incl. TO'' aggregates wall-clock time with timeouts charged at the 600\,s limit; ``Runtime w/o TO'' aggregates only over the 55 instances solved by all three strategies. Mean and median are computed over each strategy's own non-timed-out runs. The best performance of each metric is in \textbf{\textcolor{blue}{blue bold text}}.}
\label{tab:phc-results}
\begin{tabular}{l c c r r r c}
\toprule
Strategy & \#TO
  & \shortstack{Runtime\\incl. TO} & \shortstack{Runtime\\w/o TO}
  & \shortstack{Mean\\w/o TO} & \shortstack{Median\\w/o TO} & \shortstack{\# of wins}
\\
\midrule
Normal  & 37                       
    & 6h 26m 15s                        
    & 14m 3s                        
    &15.32\,s
    & 0.155\,s  
    & 14 \\
Sugar   & {\color{blue}\textbf{33}} 
    & {\color{blue}\textbf{5h 41m 22s}} 
    & {\color{blue}\textbf{3m 47s}}                      
    & {\color{blue}\textbf{4.13\,s}}                        
    & {\color{blue}\textbf{0.121\,s}}
    & 10\\
Degree  & 34                       
    & 6h 8m 30s                        
    & 8m 4s                        
    &8.79\,s
    & 0.131\,s   
    & {\color{blue}\textbf{25}} \\
Entropy & 38                        
    & 6h 31m 1s                        
    & 10m 59s
    & 11.97\,s                        
    & 0.150\,s 
    & 6 \\
\bottomrule
\end{tabular}
\end{table}

Normal attains the highest mean per-system time, 15.32s, on completions, and a highest median of 0.155s. The overall runtime and corresponding number of wins for Degree indicate it has a long tail of slow runs combined with many trivially fast ones. Entropy attains a mean of 11.97s which is between Normal and Degree. Sugar ends up with the lowest number of timeouts and lowest aggregate runtime thus indicating its supremacy on these systems.

\begin{table}[t]
\centering
\caption{PHCpack benchmark, stratified by difficulty using sugar's wall time. Per-bucket best in \textbf{\textcolor{blue}{blue bold text}}. Wins = strict time wins (unique fastest) within bucket.}
\label{tab:phcpack-stratified}
\begin{tabular}{l l c c c c}
\toprule
Difficulty bucket & Strategy  & Mean & Median & Total & Wins \\
\midrule
Easy ($<0.1$\,s, $n=25$) & Sugar & 26\,ms & 32\,ms & 649\,ms & 2 \\
 & Degree & {\color{blue}\textbf{25\,ms}} & {\color{blue}\textbf{8\,ms}} & {\color{blue}\textbf{616\,ms}} & {\color{blue}\textbf{18}} \\
 & Normal & 35\,ms & 17\,ms & 866\,ms & 4 \\
 & Entropy & 39\,ms & 26\,ms & 977\,ms & 1 \\
\midrule
Medium ($0.1$--$10$\,s, $n=21$) & Sugar & {\color{blue}\textbf{1.28\,s}} & 533\,ms & {\color{blue}\textbf{26.82\,s}} & 5 \\
 & Degree & 7.35\,s & {\color{blue}\textbf{454\,ms}} & 2m 34.4s & 5 \\
 & Normal & 5.32\,s & 1.34\,s & 1m 51.7s & {\color{blue}\textbf{6}} \\
 & Entropy & 2.43\,s & 691\,ms & 51.02\,s & 5 \\
\midrule
Hard ($>10$\,s, $n=9$) & Sugar & {\color{blue}\textbf{22.21\,s}} & 18.00\,s & {\color{blue}\textbf{3m 19.9s}} & 3 \\
 & Degree & 36.50\,s & 20.38\,s & 5m 28.5s & 2 \\
 & Normal & 1m 21.1s & {\color{blue}\textbf{15.13\,s}} & 12m 10.1s & {\color{blue}\textbf{4}} \\
 & Entropy & 1m 7.4s & 31.26\,s & 10m 6.7s & 0 \\
\bottomrule
\end{tabular}
\end{table}

\subsubsection{Stratified Performance Analysis.}

The aggregate results do not reveal any patterns of which systems each strategy performs better for and so we generated Table \ref{tab:phcpack-stratified} which classifies the systems according to the runtime of Sugar as a baseline, for a proxy for difficulty.
From the table it is evident that Degree performs best for trivial systems that can be solved under 0.1s completing fastest on most of the systems with Entropy worst, most likely because of its greater overhead in the need to compute all S-polynomials dominating at sub-millisecond timescales. Sugar performs the best on medium systems that complete between 0.1s--10s but Entropy is competitive, taking a non-trivial share of strict wins and does better than Degree and Normal. The gap between Sugar and Degree in terms of total time in this bucket demonstrates that Degree's cost advantage on easy systems does not survive into this tier. Sugar outperforms the others on systems that take higher than 10s. 

\subsubsection{System-level Runtime Distribution Analysis.}

To understand the performance of entropy we plotted a scatter plot between the runtimes for Sugar vs Entropy, Normal vs Entropy and Degree vs Entropy as shown in Figure \ref{fig:scatter} across the 55 commonly-solved systems. Against Sugar (left), most points either lie on or just below the diagonal. There is a small systematic downward bias indicating Sugar's better performance however it also shows that Entropy does not perform catastrophically worse than Sugar. 

Against Degree (centre), the points are roughly balanced around the diagonal, with Entropy clearly faster on a handful of systems (the blue points in the upper-left) and clearly slower on some systems (the red points in the lower-right). The presence of distinct off-diagonal outliers in both panels, rather than a uniform shift, shows that the aggregate timing gap is driven by a small number of pathological systems rather than overall overhead.

Against Normal we see an interesting trend where we see a non-trivial set of wins (blue) in upper left corner. We also observe a spread across the diagonal similar to Sugar and Degree. 

\begin{figure}[htbp]
    \centering
    \includegraphics[width=1\linewidth]{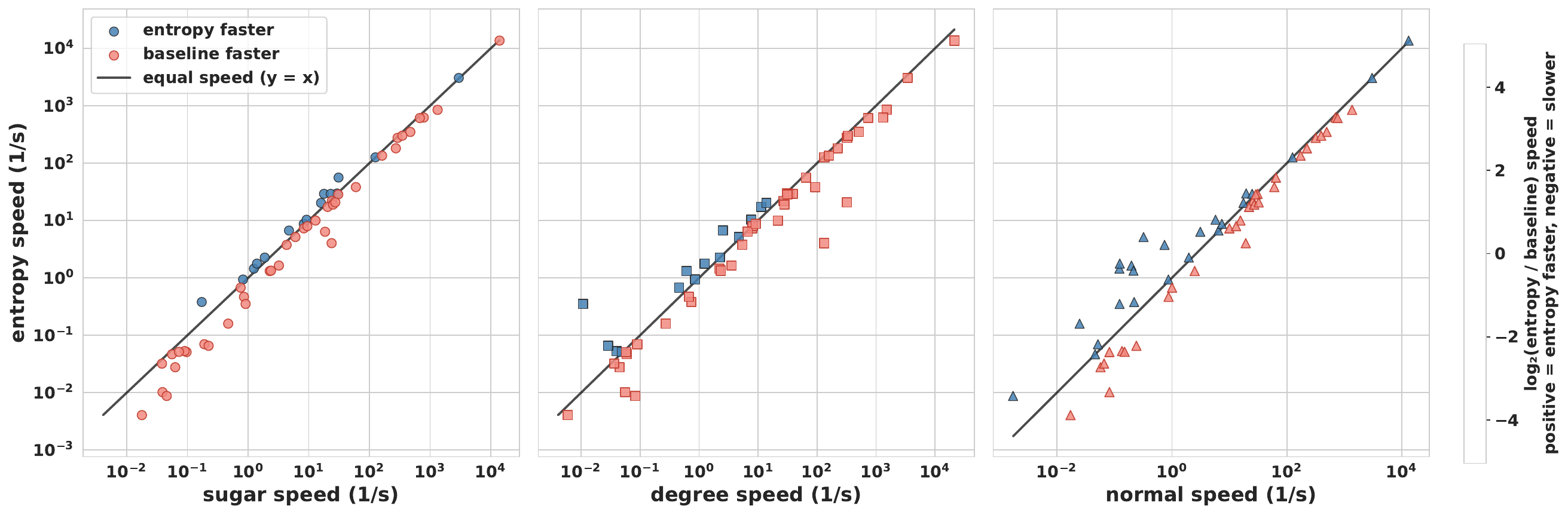}
    \caption{Per-system wall-clock comparison on the 55 commonly-solved PHCpack systems, with Entropy on the vertical axis against Sugar (left), Degree (centre) and Normal (right). Both axes are log-scaled. Points above the diagonal are systems where Entropy is faster; points below are systems where it is slower.}
    \label{fig:scatter}
\end{figure}

This concentration of higher runtimes in a few systems as shown in Figure \ref{fig:scatter} suggests Entropy's failure on PHCpack is not generic across the dataset. Three systems account for approximately 78\% of Entropy's total-time deficit relative to Sugar.  We also observe that Entropy ranks many candidate pairs as equivalent, breaking ties with First to produce poor choices; which happens far less for Sugar thus Sugar benefits from the Normal tie-break by definition.  This would suggest that while Entropy has useful information it needs to be combined with additional information for the systems in this dataset.

\subsection{Experimenting with a Synthetic PHCpack-shaped Dataset}
\label{sec:PHCpackShaped}

The experiments of the previous two subsections delivered radically different results:  Entropy dominated on the synthetic datasets produced with the methodology of \cite{PSH_20} but did not perform as well as the established heuristics on the PHCpack dataset of \cite{VPHCpack}.  We wanted to understand if this is explained by the different ``shape'' of the PHCpack dataset, or represented something deeper about the structure of polynomials emerging from examples.

We thus synthetically generated one further dataset: a new set of 100 problems, each produced from a template system from the PHCpack dataset.  The new polynomials synthesised used the same number of monomials and the same maximum degree as their template, but drew its monomials at random subject to those constraints, and drew its coefficient magnitudes (with random signs) directly from the pool of magnitudes seen in the original polynomial (rational field $\mathbb{Q}$). Repeating this 100 times produced our PHC-shaped dataset. The procedure deliberately matches per-polynomial shape without preserving any higher-level algebraic structure such as symmetry, equation coupling, or specific coefficient relationships. The motivation is to test whether the difference in heuristic performance between random and applied systems is explained by per-polynomial shape alone, or whether the deeper algebraic structure of applied systems is what drives the difference.

\begin{table}[t]
\centering
\caption{Performance on a PHC-shaped synthetic corpus of 100 systems generated by matching each real PHCpack system's per-polynomial shape (monomial count, maximum degree, variable usage, coefficient magnitudes). Run-time Mean, median, and strict wins are computed on the 57 systems all three strategies solved. \textbf{\textcolor{blue}{Blue bold text}} marks the best value in each column.}
\label{tab:phcsynthetic}
\begin{tabular}{l c c c c c}
\toprule
Strategy & Finished & Run-time & Mean & Median & Strict wins \\
\midrule
Degree   & 66/100                        & {\color{blue}\textbf{10\,m 31\,s}} & {\color{blue}\textbf{11\,s}} & 0.56\,s                        & 15 \\
Sugar    & {\color{blue}\textbf{68/100}} & 11\,m 4\,s                        & 12\,s                        & 0.48\,s & 5 \\
Normal  & 57/100                        & 28\,m 44\,s                    & 30\,s                        & {\color{blue}\textbf{0.47\,s}}                        & {\color{blue}\textbf{23}} \\
Entropy  & 66/100                        & 11\,m 11\,s                        & 11\,s                        & 0.80\,s                        & 4 \\
\bottomrule
\medskip
{\footnotesize 10 systems were ties, no unique winner}
\end{tabular}

\end{table}

Table \ref{tab:phcsynthetic} shows a somewhat similar behaviour to what we observed for the original PHCpack dataset with Entropy again performing worse than Sugar. However, here all three strategies finish similar numbers of systems (66--68 of 100) except Normal which finishes 57 out of 100 systems. A similar trend in runtime can be observed with Degree, Sugar and Entropy being in the range of 10--12m but Normal taking significantly more time of nearly 29 minutes. The runtimes show no strategy completely dominating, in contrast to the results in Tables~\ref{tab:random-results} and \ref{tab:phc-results}. 

Thus we conclude that while the right choice of selection strategy is highly dependent on the structure of the polynomials as defined in \cite{PSH_20}, the dataset parameters used there do not fully capture the difference.  This highlights the need for as broad as possible range of synthetic datasets to use when evaluating (or training machine learning models for the problem).

\subsection{The Effect of the Coefficient Field}

Beyond the shape of the polynomial systems, the other factor that distinguished the PHCpack dataset experiment in Section \ref{section:PHC_results} from the random datasets experiment reported in Section~\ref{section:Random_results} was the coefficient field: the random systems were defined over $\mathbb{Z}/32003\mathbb{Z}$, while PHCpack systems were defined over $\mathbb{Q}$. To isolate the effect of the coefficient field, we re-ran the PHCpack experiments over $\mathbb{Z}/32003\mathbb{Z}$, in which each coefficient in the input polynomials was first converted to its integer representative in $\{0, \dots, 32002\}$. 

The results are shown in Table \ref{tab:phc-results-ZZ}.  Notably all the strategies end up with fewer timeouts, consistent with modular arithmetic being substantially cheaper than rational arithmetic and thus reducing per-system runtime across the board.  The trends observed in Section \ref{section:PHC_results} is preserved in this case as well. Sugar dominated in all aspects. The only difference we see here is that in Table \ref{tab:phc-results}, Degree ended up with the most number of wins, however in  Table \ref{tab:phc-results-ZZ} Sugar dominates the wins as well. 
For Entropy we observe similar results, however in this case these differences become less pronounced in terms of runtimes compared to Table \ref{tab:phc-results}.

We conclude that the comparative performance of the selection strategies is more driven by polynomial structure rather than by the cost of the underlying coefficient arithmetic.  

\begin{table}
\centering
\caption{Performance on the PHCpack benchmark suite in the integer field (94 real-world polynomial systems drawn from engineering and scientific applications; 600\,s per-instance
timeout). \#TO is the number of timeouts.
``Runtime incl. TO'' aggregates wall-clock time with timeouts charged at the 600\,s limit; ``Runtime w/o TO'' aggregates only over the 55 instances solved by all three strategies. Mean and median are computed over each strategy's own non-timed-out runs. The best performance of each metric is in \textbf{\textcolor{blue}{blue bold text}}.}

\label{tab:phc-results-ZZ}
\begin{tabular}{l c c r r r c}
\toprule
Strategy & \#TO
  & \shortstack{Runtime\\incl. TO} & \shortstack{Runtime\\w/o TO}
  & \shortstack{Mean\\w/o TO} & \shortstack{Median\\w/o TO} & \shortstack{\# of wins}
\\
\midrule
Normal  & {\color{blue}\textbf{19}}                       
    & 3h 35m 1s                        
    & 19m 17s                        
    &15.634\,s
    & 0.329\,s  
    & 12 \\
Sugar   & {\color{blue}\textbf{19}} 
    & {\color{blue}\textbf{3h 28m 41s}} 
    & {\color{blue}\textbf{16m 16s}}                      
    & {\color{blue}\textbf{13.183\,s}}                        
    & {\color{blue}\textbf{0.269\,s}}
    & {\color{blue}\textbf{44}}\\
Degree  & 20                       
    & 3h 45m 4s                        
    & 25m 4s                        
    &20.328\,s
    & 0.294\,s   
    & 13\\
Entropy & {\color{blue}\textbf{19}}                        
    & 3h 41m 44s                        
    & 23m 39s
    & 19.178\,s                        
    & 0.362\,s 
    & 5 \\
\bottomrule
\end{tabular}
\end{table}

\section{Discussion and Conclusion}

The experiments reported in Sections \ref{section:Random_results} and \ref{section:PHC_results} tell us opposite conclusions: we view this as highly informative rather than contradictory! On the random systems where the polynomials are sparse and more heterogenous, Entropy decisively outperform classical heuristics, Sugar, Normal and Degree. This is consistent throughout the reported metrics (overall runtime, runtime on commonly solved systems, no. of wins, mean and median). On the other hand for the PHCpack dataset, where polynomials are denser with a rough average of 7.5 terms and applied algebraic structure is present, the conclusion on strategies reverses. In this case Sugar achieves the lowest aggregate runtime and wins on most of the systems with Degree at second place.

The new Entropy strategy ranks S-pairs by the Shannon entropy of the term-degree distribution of the resulting S-polynomial which is a single scalar summarising how spread or concentrated the degrees of its terms are. The information gained from this heuristic is dependent on whether the produced S-polynomials meaningfully differ by more than a multiplicative constant. On the random distributions reported in Section \ref{section:Random_results}, candidate S-pairs vary in this respect and the entropy ranking discriminates among them effectively. The various measures of leading monomial degree used by the classical heuristics have less leverage in these distributions: Entropy can read a degree-distribution difference that Sugar does not see. 

On the PHCpack dataset the picture inverts. In a small number of PHCpack systems where Entropy performs substantially worse (red points in Figure 1), we inspected the S-polynomials produced and observed that they tend to have similar degree distributions. In each case, the term-degree entropy of many candidate S-polynomials is close to the same value, and the heuristic cannot discriminate among the corresponding pairs, thus falling back to selecting whichever pair comes first in the queue, landing on pairs whose reductions happen to be expensive. Sugar does not depend on the degree distribution of the S-polynomial and survives this, but it does benefit from the additional qualitative information of Normal as a tie-break.

Our experiments on the synthetic PHCpack-shaped distribution in Section \ref{sec:PHCpackShaped}, in which random polynomials are generated to match PHCpack's polynomial structure properties but without the higher-order algebraic structure, brings the three strategies to within roughly 5\% of each other on aggregate run-time. This suggests that while a good deal of the difference between the results of Sections \ref{section:Random_results} and \ref{section:PHC_results} can be explained by the very different structures of polynomials, the language used to describe this, that we follow from \cite{PSH_20}, needs more nuance.

We characterise Entropy's strength as inputs where term-degree distributions of S-polynomials vary meaningfully between candidate pairs, and its weakness as inputs where polynomial size or algebraic structure produces S-polynomials with near-identical degree distributions. 
We do not claim Entropy to be a state-of-the-art selection strategy. Our contribution is rather to present evidence that information-theoretic measures of S-polynomial structure carry useful selection signal, and to give clear evidence that different selection strategies perform radically differently on different shaped datasets, necessitating further research.

The Sugar and Entropy signals clearly draw different kinds of information from the polynomials: each is informative in distributions where the other is not. Hence a natural extension would be to combine the information into one strategy with a weighted combination or a two-stage selector.  Or indeed, a machine learning model may be able to process these, and many other signals, to give an optimal selection on a wide range of problems.

\begin{credits}

\subsubsection{Data Access Statement.}  Code and datasets for experiments are public at: \\
{\scriptsize \url{https://doi.org/10.5281/zenodo.20556193}}.

\subsubsection{\ackname} US is supported by a joint scholarship from Coventry University and Deakin University.

\subsubsection{\discintname}
The authors have no competing interests to declare that are relevant to the content of this article.
\end{credits}

\bibliographystyle{splncs04}
\bibliography{references}

@InProceedings{Buchberger_1979,
    author    = {Buchberger, Bruno},
    title     = {A criterion for detecting unnecessary reductions in the construction of {G}r{\"o}bner-bases},
    booktitle = {Symbolic and Algebraic Computation},
    publisher = {Springer Berlin Heidelberg},
    address   = {Berlin, Heidelberg},
    pages     = {3-21},
    isbn      = {978-3-540-35128-3},
    year      = {1979},
    doi       = {10.1007/3-540-09519-5_52},
}

@inproceedings{Giovini_et_al_1991,
	author = {Giovini, Alessandro and Mora, Teo and Niesi, Gianfranco and Robbiano, Lorenzo and Traverso, Carlo},
	title = {“{One} sugar cube, please” or selection strategies in the {Buchberger} algorithm},
	booktitle = {Proceedings of the 1991 international symposium on {Symbolic} and algebraic computation},
	publisher = {ACM},
	pages = {49-54},
	year = {1991},
	doi = {10.1145/120694.120701},
	isbn = {9780897914376},
	address = {Bonn West Germany},
}

@article{Gebauer_Moller_1988,
	author = {Gebauer, R\"udiger and M\"oller, H. Michael},
	title = {On an installation of {B}uchberger's algorithm},
	journal = {Journal of Symbolic Computation},
	volume = {6},
	number = {2},
	pages = {275-286},
	year = {1988},
	doi = {10.1016/S0747-7171(88)80048-8},
}

@inproceedings{FJ_2002,
	address = {Lille France},
	title = {A new efficient algorithm for computing {Gröbner} bases without reduction to zero (\textit{{F}}$_{\textrm{5}}$)},
	isbn = {9781581134841},
	doi = {10.1145/780506.780516},
	language = {en},
	urldate = {2026-01-02},
	booktitle = {Proceedings of the 2002 international symposium on {Symbolic} and algebraic computation},
	publisher = {ACM},
	author = {Faugère, Jean Charles},
	month = jul,
	year = {2002},
	pages = {75--83},
}

@article{MEMA_1982,
	title = {The complexity of the word problems for commutative semigroups and polynomial ideals},
	volume = {46},
	copyright = {https://www.elsevier.com/tdm/userlicense/1.0/},
	issn = {00018708},
	doi = {10.1016/0001-8708(82)90048-2},
	language = {en},
	number = {3},
	urldate = {2026-01-02},
	journal = {Advances in Mathematics},
	author = {Mayr, Ernst W. and Meyer, Albert R.},
	month = dec,
	year = {1982},
	pages = {305--329},
}

@mastersthesis{MC_2004,
  author       = {McKay, Clinton E.},
  title        = {An Analysis of Improvements to Buchberger's Algorithm for Gr{\"o}bner Basis Computation},
  school       = {University of Maryland, College Park},
  year         = {2004},
  month        = dec,
  address      = {College Park, MD},
  type         = {Master's thesis},
  url          = {https://drum.lib.umd.edu/items/07444dba-4078-430d-b77f-fe8fb25ba3bf},
}

@inproceedings{KHI_23,
	author = {Kera, Hiroshi and Ishihara, Yuki and Kambe, Yuta and Vaccon, Tristan and Yokoyama, Kazuhiro},
    title = {Learning to compute {G}r\"{o}bner bases},
    year = {2024},
    isbn = {9798331314385},
    publisher = {Curran Associates Inc.},
    address = {Red Hook, NY, USA},
    booktitle = {Proceedings of the 38th International Conference on Neural Information Processing Systems},
    articleno = {1044},
    numpages = {47},
    url = {https://openreview.net/forum?id=ZRz7XlxBzQ},
    location = {Vancouver, BC, Canada},
    series = {NIPS '24}
}

@inproceedings{PSH_20,
  title      = {Learning Selection Strategies in {B}uchberger’s Algorithm},
  author     = {Peifer, Dylan and Stillman, Michael and Halpern-Leistner, Daniel},
  booktitle  = {Proceedings of the 37th International Conference on Machine Learning},
  pages      = {7575--7585},
  year       = {2020},
  publisher  = {PMLR},
  address    = {Vienna, Austria},
  url        = {https://proceedings.mlr.press/v119/peifer20a.html},
  urldate    = {2025-01-04},
  eventtitle = {International Conference on Machine Learning},
  langid     = {english},
}

@book{Cox_et_al_2015,
    author = {Cox, David A. and Little, John and O'Shea, Donal},
    title = {Ideals, Varieties, and Algorithms: An Introduction to Computational Algebraic Geometry and Commutative Algebra},
    publisher = {Springer International Publishing},
    address = {Cham},
    year = {2015},
    isbn = {978-3-319-16721-3},
    doi = {10.1007/978-3-319-16721-3_2},
}

@article{Buch_06,
    title = {Bruno {B}uchberger’s {PhD} thesis 1965: An algorithm for finding the basis elements of the residue class ring of a zero dimensional polynomial ideal},
    journal = {Journal of Symbolic Computation},
    volume = {41},
    number = {3},
    pages = {475-511},
    year = {2006},
    issn = {0747-7171},
    doi = {10.1016/j.jsc.2005.09.007},
    author = {Bruno Buchberger},
}

@article{VPHCpack,
  author  = {Verschelde, Jan},
  title   = {Algorithm 795: {PHC}pack: A General-Purpose Solver for Polynomial Systems by Homotopy Continuation},
  journal = {ACM Transactions on Mathematical Software},
  volume  = {25},
  number  = {2},
  pages   = {251--276},
  year    = {1999},
  doi     = {10.1145/317275.317286},
}

@inproceedings{CMF_08,
    author = {Caboara, Massimo and Caruso, Fabrizio and Traverso, Carlo},
    title = {Gr\"{o}bner bases for public key cryptography},
    year = {2008},
    isbn = {9781595939043},
    publisher = {Association for Computing Machinery},
    address = {New York, NY, USA},
    doi = {10.1145/1390768.1390811},
    booktitle = {Proceedings of the Twenty-First International Symposium on Symbolic and Algebraic Computation},
    pages = {315–324},
    numpages = {10},
    location = {Linz/Hagenberg, Austria},
    series = {ISSAC '08}
}

@book{SMM_09,
  title     = {Gr{\"o}bner Bases, Coding, and Cryptography},
  editor    = {Sala, Massimiliano and Mora, Teo and Perret, Ludovic and Sakata, Shojiro and Traverso, Carlo},
  publisher = {Springer},
  address   = {Berlin, Heidelberg},
  year      = {2009},
  isbn      = {978-3-540-93805-7},
  doi       = {10.1007/978-3-540-93806-4},
}

@article{DK_86,
    title = {Using {G}röbner bases to reason about geometry problems},
    journal = {Journal of Symbolic Computation},
    volume = {2},
    number = {4},
    pages = {399-408},
    year = {1986},
    issn = {0747-7171},
    doi = {10.1016/S0747-7171(86)80007-4},
    author = {Deepak Kapur},
}

@Article{PdREC24,
    author = {Lynn Pickering and Tereso {Del Rio Almajano} and Matthew England and Kelly Cohen},
    title = {Explainable {AI} Insights for Symbolic Computation: {A} case study on selecting the variable ordering for cylindrical algebraic decomposition},
    journal = {Journal of Symbolic Computation},
    volume = {123},
    pages = {102276},
    doi = {10.1016/j.jsc.2023.102276},
    year = {2024},
}

@Article{dRE24,
    author = {Tereso {del R{\i}o} and Matthew England},
    title = {Lessons on Datasets and Paradigms in Machine Learning for Symbolic Computation: {A} Case Study on {CAD}},
    journal = {Mathematics in Computer Science},
    volume = {18},
    article_number = {17},
    pages = {1--27},
    doi = {10.1007/s11786-024-00591-0},
    year = {2024},
}

@InProceedings{BEG24b,
    author = {Rashid~Barket and Matthew~England and J\"{u}ergen~Gerhard},
    title = {Symbolic Integration Algorithm Selection with Machine Learning: {LSTM}s vs Tree {LSTM}s},
    editor = {Buzzard, Kevin and Dickenstein, Alicia and Eick, Bettina and Leykin, Anton and Ren, Yue},
    booktitle = {Mathematical Software -- ICMS 2024},
    series = {Lecture Notes in Computer Science},
    volume = {14749},
    pages = {167--175},
    publisher = {Springer Nature Switzerland},
    address={Cham},
    doi = {10.1007/978-3-031-64529-7_18},
    year = {2024}
}

@InProceedings{BSEG24,
    author = {Barket, Rashid and Shafiq, Uzma and England, Matthew and Gerhard, J\"{u}rgen},
    title = {Transformers to Predict the Applicability of Symbolic Integration Routines},
    booktitle = {The 4th Workshop on Mathematical Reasoning and {AI} ({MATH-AI}) at {NeurIPS}'24}, 
    url = {https://openreview.net/forum?id=b2Ni828As7},
    year = {2024},
    publisher = {NeurIPS 2024},
    address   = {Vancouver, Canada},
    numpages = {9},
}

@article{BB_01,
  author       = {Buchberger, Bruno},
  title        = {{Gr{\"o}bner Bases and Systems Theory}},
  journal      = {Multidimensional Systems and Signal Processing},
  volume       = {12},
  number       = {3-4},
  pages        = {223--251},
  year         = {2001},
  doi          = {10.1023/A:1011949421611},
  publisher    = {Springer Science+Business Media},
}

@PhdThesis{Peifer2021,
     author = {Dylan James Peifer},
     title = {Reinforcement Learning in {B}uchberger's Algorithm},
     school = {Cornell University},
     url = {http://doi.org/10.7298/4fpv-bn09},
     year = {2021},
}

@article{Shannon_1948,
    author = {Shannon, C. E.},
    title = {A mathematical theory of communication},
    journal = {The Bell System Technical Journal},
    volume = {27},
    number = {3},
    pages = {379-423},
    year = {1948},
    doi = {10.1002/j.1538-7305.1948.tb01338.x}
}

@article{Sadeghimanesh_Feliu_2019,
    author = {Sadeghimanesh, AmirHosein and Feliu, Elisenda},
    title = {Gr\"obner bases of reaction networks with intermediate species},
    journal = {Advances in Applied Mathematics},
    volume = {107},
    number = {},
    pages = {74-101},
    year = {2019},
    doi = {https://doi.org/10.1016/j.aam.2019.02.006},
}

@Article{SymPy2017,
 author = {A.~Meurer and C.~P.~Smith and M.~Paprocki and O.~\v{C}ert\'{i}k and S.~B.~Kirpichev and M.~Rocklin and A.~Kumar and S.~Ivanov and J.~K.~Moore and S.~Singh and T.~Rathnayake and S.~Vig and B.~E.~Granger and R.~P.~Muller and F.~Bonazzi and H.~Gupta and S.~Vats and F.~Johansson and F.~Pedregosa and M.~J.~Curry and A.~R.~Terrel and \v{S}.~Rou\v{c}ka and A.~Saboo and I.~Fernando and S.~Kulal and R.~Cimrman and A.~Scopatz},
 title = {{SymPy}: {S}ymbolic computing in {P}ython},
 volume = {3},
 pages = {e103},
 journal = {PeerJ Computer Science},
 url = {https://doi.org/10.7717/peerj-cs.103},
 year = {2017}
}

@InCollection{HEWDPB14,
 author = {Z.~Huang and M.~England and D.~Wilson and J.~H.~Davenport and L.~Paulson and J.~Bridge},
 title = {Applying machine learning to the problem of choosing a heuristic to select the variable ordering for cylindrical algebraic decomposition},
 booktitle = {Intelligent Computer Mathematics},
 editor = {S.~M.~Watt and J.~H.~Davenport and A.~P.~Sexton and P.~Sojka and J.~Urban},
 series = {Lecture Notes in Artificial Intelligence},
 volume = {8543},
 pages = {92--107},
 publisher = {Springer International},
 url = {http://dx.doi.org/10.1007/978-3-319-08434-3_8},
 year = {2014}
}

@InProceedings{SNB23,
 author = {Vaibhav Sharma and Abhinav Nagpal and {Muhammed Fatih} Balin},
 title = {{SIRD}: {S}ymbolic Integration Rules Dataset},
 booktitle = {Proceedings of the 3rd Workshop on Mathematical Reasoning and AI (MATH-AI 2023) at NIPS 2023},
 url = {https://mathai2023.github.io/papers/39.pdf},
 year = {2023}
}

@InProceedings{JDLHMZ23,
 author = {Fuqi Jia and Yuhang Dong and Minghao Liu and Pei Huang and Feifei Ma and Jian Zhang},
 title = {Suggesting Variable Order for Cylindrical Algebraic Decomposition via Reinforcement Learning},
 booktitle = {Thirty-seventh Conference on Neural Information Processing Systems (NIPS 2023)},
 url = {https://openreview.net/forum?id=vNsdFwjPtL},
 year = {2023}
}

@Article{HMS25,
 author = {Amir Hashemi and Mahshid Mirhashemi and Werner M. Seiler},
 title = {Machine learning parameter systems, {N}oether normalisations and quasi-stable positions},
 journal = {Journal of Symbolic Computation},
 volume = {126},
 pages = {102345},
 url = {https://doi.org/10.1016/j.jsc.2024.102345},
 year = {2025}
}

\end{document}